\theoremstyle{plain}
\newtheorem{theorem}{Theorem}
\theoremstyle{definition}
\def\be{\begin{equation}}
	\def\ee{\end{equation}}
\def\ba{\begin{eqnarray}}
	\def\ea{\end{eqnarray}}
\def\lo{\longrightarrow}
\def\la{\langle}
\def\ra{\rangle}
\def\a{\alpha}
\def\b{\beta}
\def\L{\Lambda}
\def\D{\Delta}
\def\D{\Delta}
\def\p{{\bf p}}
\def\ni{\noindent}
\def\G{\Gamma}
\def\L{\Lambda}
\def\ni{\noindent}
\def\bex{\begin{dinglist}{110}\dsquare}
	\def\eee{\end{dinglist}}
\def\bet{\begin{dinglist}{110}\bsquare}
	\def\bfr{\begin{mdframed}[backgroundcolor=blue!20]\vspace{0.5cm}}
		\def\efr{\vspace{0.5cm}\end{mdframed}}
\title{Bending hyperplanes: Nonlinear entanglement witnesses via envelopes of linear witnesses}
\author{AmirHossein Tangestaninejad$^1$ \and Vahid Karimipour$^1$}
\date{
	$^1$\small{Deptartment of Physics, Sharif University of Technology, Tehran, Iran} \\%
	%
}
\begin{document}
	\maketitle


\begin{abstract}
	Entanglement witnesses (EWs) are fundamental tools for detecting entanglement. However traditional linear witnesses often fail to identify most of the entangled states. In this work, we construct a family of nonlinear entanglement witnesses by taking the envelope of linear witnesses defined over continuous families of pure bipartite states with fixed Schmidt bases. This procedure effectively "bends" the hyperplanes associated with linear witnesses into curved hypersurfaces, thereby extending the region of detectable entangled states. The resulting conditions can be expressed in terms of the positive semidefiniteness of a family of matrices, whose principal minors define a hierarchy of increasingly sensitive detection criteria. We show that this construction is not limited to the transposition map and generalizes naturally to arbitrary positive but not completely positive (PnCP) maps, leading to nonlinear analogs of general entanglement witnesses. We emphasize that the required measurements remain experimentally accessible, as the nonlinear criteria are still formulated in terms of expectation values over local operator bases. Through both analytical and numerical examples, we demonstrate that the proposed nonlinear witnesses outperform their linear counterparts in detecting entangled states which may evade individual linear EWs in the construction. This approach offers a practical and conceptually elegant enhancement to entanglement detection in finite-dimensional systems.\\

	\noindent\textbf{Keywords:} Nonlinear Entanglement Witness, Hyperplane, Hypersurface, Positive, but not Completely Positive (PnCP) maps.
\end{abstract}

\section {INTRODUCTION}
\color{black}

Detecting and characterizing entanglement remains one of the most fundamental challenges in quantum information theory \cite{NielsenChuang2000}. Among various approaches \cite{peres1996separability, horodecki1999reduction, nielsen2001separable, doherty2002distinguishing, ChenWu2003, rudolph2003cross, doherty2004complete, ioannou2004improved,hulpke2005two}, entanglement witnesses (EWs) offer a particularly practical framework due to their operational interpretation and experimental implementability \cite{Horodecki1996, Terhal2000,Barbieri2003}. A linear EW corresponds to a hyperplane in the state space that separates certain entangled states from the convex set of separable states. However, the detection power of linear witnesses is inherently limited, especially near the boundary of separability, where many entangled states remain undetected \cite{lewenstein1998separability,Guhne2002,vznidarivc2007detecting}. \\

\ni
Due to these limitations, considerable attention has recently been devoted to nonlinear entanglement witnesses \cite{hofmann2003violation,guhne2006nonlinear,guhne2007covariance,guhne2007nonlinear,moroder2008iterations,arrazola2012accessible,shen2020nonlinear,sen2023measurement,rico2024entanglement,wang2025witness}.
This line of research was initiated in Refs. \cite{guhne2006nonlinear,guhne2007nonlinear}, which introduced a simple  idea: starting from a linear entanglement witness $W$, one constructs an extended nonlinear functional $$F_{nl}(\rho):= \tr(W\rho)-\chi(\rho),$$ where $\chi(\rho)$
 is a nonlinear, typically quadratic, function of the density matrix. The function $F_{nl}(\rho)$
remains non-negative for all separable states, while becoming negative for a larger class of entangled states.
In particular, for many entangled states the nonlinear correction $\chi(\rho)$
 becomes sufficiently negative to render $F_{nl}(\rho)<0$
 even in cases where the linear term $\tr(W\rho)$
  would be positive or zero.
Geometrically, the nonlinearity in $\chi(\rho)$
 effectively bends the witness surface into a curved boundary that more tightly envelops the convex set of separable states, thereby revealing entangled states hidden beyond the flat hyperplane defined by the linear witness.
It is worth noting that $F_{nl}$
typically depends on only a few additional observables of the state, so the measurement requirements for the nonlinear witness are only marginally greater than those for its linear counterpart.\\

\ni In this work, we adopt a different approach and construct a family of nonlinear entanglement witnesses by taking the envelope of a continuous family of linear witnesses (see Fig. \ref{fig:EnvImg}).
The resulting curved surface effectively bends the hyperplanes corresponding to these linear witnesses, forming a more adaptive detection boundary that outperforms any individual linear element of the family.
This construction leads to a nonlinear matrix condition, whose positive semidefiniteness—together with the positivity of its principal minors—establishes a hierarchy of increasingly sensitive entanglement criteria.
It remains an open question whether this construction can be recast as a special case of the nonlinear framework introduced in Ref.~\cite{guhne2006nonlinear}, for an appropriate choice of the nonlinear function 
$\chi$.\\

\noindent
Unlike previous nonlinear approaches based on state-dependent optimization\cite{Guhne2003,ChenWu2003}, our method provides analytic and universal criteria derived directly from first principles.
It generalizes naturally to any positive but not completely positive (PnCP) map, yielding nonlinear analogs of canonical linear witnesses.
Importantly, all required expectation values can be expressed in terms of the same fixed set of local operators used for the linear witnesses \cite{Terhal2000,Barbieri2003}.
\noindent
We demonstrate the power of this construction through both analytical and numerical examples, including mixtures of Bell states, outputs of amplitude-damping channels, and randomly generated entangled states.
In all cases, the proposed nonlinear criteria detect a strictly larger class of entangled states than any constituent linear witness, thereby providing a practical and systematic enhancement to the theory and practice of entanglement detection.

\section{NOTATION AND CONVENTIONS}\label{section:notation}

  The complex vector space of dimension $d$ is denoted by $\mathbb{C}^d$, the space of $d$-dimensional square complex matrices by $M_d$, 
the vector space of linear operators on a Hilbert space $\mathcal{H}$ by $\mathcal{L}(\mathcal{H})$, and the convex cone of positive operators in $\mathcal{L}(\mathcal{H})$ by $\mathcal{L}^+(\mathcal{H})$ and the set of density matrices by $\mathcal{D}(\mathcal{H})$.  A Positive, but not Completely Positive (PnCP) map, is a linear Hermiticity preserving map $\Lambda: \mathcal{L}(\mathcal{H}_B)\to \mathcal{L}(\mathcal{H}_A)$ which is positive, but whose extension $\mathbb{I}\otimes \Lambda$ is no longer positive.   \\
\ni Fixing a basis $\{\ket{e_i}\}_{i=1}^{d_A}\in \mathcal{H}_A$, and $\{\ket{f_j}\}_{j=1}^{d_B}\in \mathcal{H}_B$,  the Schmidt decomposition of a pure state  $\mathcal{H}_A\otimes \mathcal{H}_B$  is written as 
\be
|\psi\ra=\sum_{i=1}^k \sqrt{p_i}|e_i,f_i\ra,
\ee
where $p_i >0$, $k$ is the Schmidt rank and the Schmidt coefficients are collected in a vector $${\bf p}=(\sqrt{p_1},\sqrt{p_2},\cdots \sqrt{p_k})^T.$$
The partial transposition is defined as a map $\mathbb{I}\otimes T:\mathcal{L}(\mathcal{H}_A \otimes\mathcal{H}_B)\to\mathcal{L}(\mathcal{H}_A\otimes\mathcal{H}_B)$ \cite{chruscinski2014entanglement}. Partial transposition of an operator $X\in \mathcal{L}(\mathcal{H}_A\otimes \mathcal{H}_B)$, where $$X=\sum_{i,j,k,l} X_{ij,kl}|e_i,f_j\ra\la e_k,f_l|$$ is given by $X^\Gamma\in \mathcal{L}(\mathcal{H}_A\otimes \mathcal{H}_B)$ where $$X^\Gamma=\sum_{i,j,k,l} X_{ij,kl}|e_i,f_l\ra\la e_k,f_j|=\sum_{i,j,k,l}^d X_{il,kj}|e_i,f_j\ra\la e_k,f_l|.$$ 
\ni We note that a linear entanglement witness (EW) is a Hermitian operator $W$ such that the negativity of the linear function  $\tr(W\rho)$ detects the entanglement of $\rho$. In contrast, a nonlinear witness does the same, except that it is a nonlinear function of the density matrix, denoted by $F(\rho)$ throughout this paper. Since we construct a hierarchy of nonlinear witnesses, we differentiate them by subscripts. This will become clear as we go ahead.\\

\ni Throughout the paper, we deal with bipartite entangled states of the form $\rho\in \mathcal{L}^+(\mathcal{H}_A\otimes \mathcal{H}_B)$ where $\mathcal{H}_B$ is isomorphic to $\mathcal{H}_A$, i.e. $\mathcal{H}_A\cong \mathcal{H}_B\cong \mathbb{C}^d$.
All the results and arguments can be readily generalized to the case where $\rho\in \mathcal{L}^+(\mathcal{H}_A\otimes \mathcal{H}_B)$, where $\text{dim}(\mathcal{H}_A)$ need not necessarily be equal to $\text{dim}(\mathcal{H}_B)$.\\

\ni Finally, we denote linear and nonlinear entanglement witnesses derived from the transposition map respectively by $W_T$ and $F_T$. The linear and nonlinear witnesses derived from a general PnCP $\Lambda$ are denoted respectively by $W_\Lambda$ and $F_\Lambda$.\\

 \section{NONLINEAR ETNAGLEMENT WITNESSES FROM TRANSPOSITION MAP}\label{section:nonlinearwitness}
 In this section, we first introduce our simplest nonlinear witnesses for any dimension and will describe their geometrical meanings in the next section. We will see that depending on the dimension of the density matrix, one can have many nonlinear witnesses and explain how our simplest witnesses are related to the transposition map and how they generalize to more general PnCP maps. We describe our results for the case where the dimensions of the two Hilbert spaces are the same, i.e. $d_A=d_B=d$. All our results can be easily generalized to the more general case, where the dimensions are different. Our first basic result is the following:
 \begin{theorem}\label{thm1}
  Fix two bases for $\mathcal{H}_A$ and $\mathcal{H}_B$ and let $\rho\in \mathcal{D}(\mathcal{H}_A\otimes \mathcal{H}_B)$ be a bipartite density matrix, $\rho=\sum_{i,j,k,l}\rho_{ij,kl}|e_i,f_j\ra\la e_k,f_l|$.   Define a matrix $\Delta_T(\rho)$ with the following entries:
  \be
  [\Delta_T(\rho)]_{ij}=\frac{1}{2}(\rho_{ij,ji}+\rho_{ji,ij}).
  \ee
  Then $\Delta_T(\rho)$ is a positive semidefinite matrix if $\rho$ is a separable state, hence negativity of $\Delta_T(\rho) $ is a nonlinear witness of the entanglement of $\rho$.\\
\end{theorem}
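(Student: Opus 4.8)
The plan is to recognize $\Delta_T(\rho)$ as the single matrix that simultaneously encodes an entire family of decomposable linear witnesses, and then to deduce its positive semi-definiteness from the positivity of the partial transpose of separable states. First I would fix the two given bases and, for each real vector $c=(c_1,\dots,c_d)$, introduce the Schmidt-aligned pure state $\ket{\phi_c}=\sum_i c_i\ket{e_i,f_i}$ and the associated transposition witness $W_c:=(\ket{\phi_c}\bra{\phi_c})^{\Gamma}$. The strategy then has two moves: show that evaluating this linear witness on $\rho$ reproduces exactly the quadratic form of $\Delta_T$, namely $\tr(W_c\rho)=c^{T}\Delta_T(\rho)\,c$; and show that for separable $\rho$ this quantity is nonnegative for every $c$, which is precisely the statement $\Delta_T(\rho)\succeq 0$.

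For the first move I would use the Hilbert--Schmidt self-adjointness of partial transposition, $\tr(A^{\Gamma}B)=\tr(A\,B^{\Gamma})$, to write
\be
\tr(W_c\rho)=\tr(\ket{\phi_c}\bra{\phi_c}\,\rho^{\Gamma})=\bra{\phi_c}\rho^{\Gamma}\ket{\phi_c}.
\ee
Since $\ket{\phi_c}$ is supported only on the diagonal product states $\ket{e_i,f_i}$, this expectation samples exactly the entries $(\rho^{\Gamma})_{ii,jj}=\rho_{ij,ji}$, giving $\bra{\phi_c}\rho^{\Gamma}\ket{\phi_c}=\sum_{i,j}\bar c_i c_j\,\rho_{ij,ji}$, where I have used the index rule $(\rho^{\Gamma})_{ij,kl}=\rho_{il,kj}$ recorded in Section~\ref{section:notation}. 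Writing $N_{ij}:=\rho_{ij,ji}$, Hermiticity of $\rho$ forces $N_{ji}=\overline{N_{ij}}$, so $N$ is Hermitian and its real symmetric part is exactly $\Delta_T(\rho)=\tfrac12(N+N^{T})$. Because a real $c$ annihilates the antisymmetric imaginary part, $\bra{\phi_c}\rho^{\Gamma}\ket{\phi_c}=c^{T}\Delta_T(\rho)\,c$ for real $c$, which is the desired identity. Equivalently, and this is the cleanest way to phrase the underlying fact, $N$ is precisely the principal submatrix of $\rho^{\Gamma}$ indexed by the diagonal pairs $\{(i,i)\}_{i=1}^{d}$.

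The second move would then be immediate from the Peres--Horodecki criterion: separability of $\rho$ implies $\rho^{\Gamma}\succeq 0$, whence every principal submatrix of $\rho^{\Gamma}$ is positive semi-definite and $\bra{\phi_c}\rho^{\Gamma}\ket{\phi_c}\ge 0$ for all $c$. Combined with the identity above, this yields $c^{T}\Delta_T(\rho)\,c\ge 0$ for every real $c$, i.e. $\Delta_T(\rho)\succeq 0$. Reading the contrapositive, a negative eigenvalue of $\Delta_T(\rho)$ certifies entanglement, and its eigenvector singles out the one member $W_c$ of the family that detects $\rho$ --- which is exactly the envelope picture the paper is after.

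I expect the hard part to be bookkeeping rather than conceptual: one must track the partial-transpose index swap carefully enough to be sure that the off-diagonal blocks of $\rho^{\Gamma}$ genuinely drop out and only the entries $\rho_{ij,ji}$ are sampled, and one must be mindful of the real-versus-complex subtlety --- restricting to real coefficient vectors is what lands the quadratic form exactly on $\Delta_T=\operatorname{Re}N$ rather than on the full Hermitian matrix $N$ (although $N\succeq 0$ would imply $\Delta_T\succeq 0$ in any case). All of the substantive content is carried by the positivity of $\rho^{\Gamma}$ for separable states, which is the single external input.
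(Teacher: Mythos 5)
Your proof is correct, but it takes a genuinely different route from the paper's. The paper proves the theorem from first principles: it evaluates $[\Delta_T(\sigma)]_{ij}$ on a pure product state $\sigma=\ket{a}\bra{a}\otimes\ket{b}\bra{b}$, observes that $\bra{v}\Delta_T(\sigma)\ket{v}$ is a sum of two squared moduli, $\tfrac12\big(|\sum_i v_i a_i b_i^*|^2+|\sum_i v_i a_i^* b_i|^2\big)\ge 0$, and then extends to all separable states by convexity and linearity of $\rho\mapsto\Delta_T(\rho)$. You instead identify $N_{ij}=\rho_{ij,ji}$ as the principal submatrix of $\rho^\Gamma$ on the diagonal pairs $(i,i)$, note $\Delta_T(\rho)=\Re N$, and invoke the Peres--Horodecki criterion ($\rho$ separable $\Rightarrow\rho^\Gamma\succeq 0$) to conclude $N\succeq 0$ and hence $\Delta_T(\rho)\succeq 0$; your index bookkeeping and the real-versus-complex step are both sound. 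Your route is structurally illuminating: it proves in one stroke the identity $\tr(W_T(\p)\rho)=\p^{T}\Delta_T(\rho)\p$ that the paper only derives later (in Section~\ref{section:geometric}) for the envelope construction, and it makes manifest that this particular nonlinear witness can never detect PPT-entangled states, consistent with Figure~\ref{fig:EnvImg}. What it costs is generality: it leans on positivity of the partial transpose as an external input, whereas the paper's direct product-state computation is precisely the template that transfers verbatim to Theorem~2 for an arbitrary PnCP map $\Lambda$, where the analogous shortcut would have to be rebuilt around positivity of $(\mathbb{I}\otimes\Lambda)(\rho)$ rather than of $\rho^\Gamma$.
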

\begin{proof}    
	Consider a pure product state $\sigma=\ket{a}\bra{a}\otimes \ket{b}\bra{b}$, for which the elements $[\D_T(\sigma)]_{i,j}$ read
	\begin{equation}
		[\D_T(\sigma)]_{i,j} = \frac{1}{2}(a_i^* a_j b_j^* b_i + a_j^* a_i b_i^* b_j),
	\end{equation}
	where $a_i=\bra{a}e_i\rangle$ and $b_i = \bra{b}f_i\rangle$. For an arbitrary vector $\ket{v}$, we find
	\begin{align}
		\bra{v}\D_T(\sigma)\ket{v} &= \frac{1}{2}\sum_{i,j}\Big((v^*_i a^*_i b_i)(v_j a_j b^*_j) + (v^*_i a_i b^*_i)(v_j a^*_j b_j)\Big)  \geq 0.
	\end{align}
	\noindent 
	Since any separable state $\rho$ is a convex combination of pure product states, i.e.  $\rho =\sum_k p_k \ket{a_k}\bra{a_k}\otimes \ket{b_k}\bra{b_k}$, the result follows for all separable states. 
\end{proof}

\section{THE GEOMETRICAL INTERPRETATION OF NONLINEARITY}\label{section:geometric}
Although it is evident that the witness is nonlinear, a geometrical interpretation clarifies the root cause of this nonlinearity and explains why they have more detection power compared with linear witnesses, as we will see in the examples. The main idea is that we construct a nonlinear witness like Theorem \ref{thm1} as the envelope of a set of linear witnesses. In other words, we bend the set of hyperplanes defined by these linear witnesses toward the set of separable states in such a way that the set of entangled states detected by this nonlinear witness is now slightly larger than each of those linear witnesses alone.  To show this, we proceed as follows:\\

\ni Consider a family of states in Schmidt decomposed form $|\psi(\p)\ra\in \mathcal{H}_A\otimes \mathcal{H}_B,$
\be
|\psi({\bf p})\ra=\sum_{i=1}^k \sqrt{p_i} |e_i, f_i\ra.
\ee
Note that we removed the zero Schmidt coefficients. Thus, $k\leq d$ is the Schmidt rank of $\ket{\psi({\bf p})}$, and $p_i >0$.
We fix the Schmidt bases $\{|e_i\ra, |f_i\ra\}$ and only vary the parameters $\{p_i\}$ which are the squares of Schmidt coefficients $\sqrt{\{p_i\}}$. The family of linear EWs constructed from these states in the form $W(\p)=|\psi(\p)\ra\la \psi(\p)|^\Gamma$ has the explicit form
\begin{equation}\label{linearextwitness}
	W_T({\bf p})=\sum_{i,j=1}^k\sqrt{p_i p_j}\ket{e_i,f_j}\bra{e_j,f_i}.
\end{equation}
 In the space of all density matrices $\rho\in \mathcal{D}(\mathcal{H}_{AB})$, this witness defines a hyperplane with the equation    $ \tr(W({\bf p})\rho)=0$ or 
\begin{equation}\label{hyperplaneEq}
	\tr(W_T({\bf p})\rho)= \sum_{i,j}\sqrt{p_i p_j}\bra{e_j,f_i}\rho\ket{e_i,f_j}=0.
\end{equation}
Symmetrizing over the indices $i$ and $j$, we find 
\begin{equation}\label{hyperplaneEqS}
	\tr(W_T({\bf p})\rho)= \sum_{i,j=1}^k\sqrt{p_i p_j}\Delta_T(\rho)_{ij}=0,
\end{equation}
where
\be
\Delta_T(\rho)_{ij}=\frac{1}{2}(\rho_{ij,ji}+\rho_{ji,ij}).
\ee
We are interested in the collective effect of these hyperplanes as a surface on which all the hyperplanes are tangent. This is equivalent to determining the envelope \cite{bruce1992curves} of this family, which in turn is obtained by eliminating the parameter $p$ among the following set of equations \cite{bruce1992curves}. For a more detailed explanation see the Appendixes. 
\begin{equation}\label{envelopeEqs}
	\tr(W_T(\p) \rho)=0, \hspace{1.5cm} \partial_{p_n}[\tr(W(\p)\rho)]=0,\hspace{0.5cm} \forall\ n=1\cdots k,
\end{equation}
subject to the constraint $\sum_i p_i=1$. Thus, the envelope will solely depend on the elements $\rho_{ij,ji}$ of a given density matrix. To eliminate the parameters $\{{\bf p}\}$, we use the Lagrange multiplier method and define the Lagrangian function \cite{bertsekas2014constrained} as
\begin{equation}\label{lg1}
	\mathcal{L}[\boldsymbol{\rho},\p,\lambda] := \tr(W_T(\p)\rho) + \lambda(\sum_i p_i - 1).
\end{equation}
Thus, Eqs. (\ref{envelopeEqs}) become
\begin{equation}\label{lg2}
	\begin{cases}
		
		\tr(W_T(\p) \rho)=0,\\
		\\
		\partial_{p_n}  \mathcal{L}[\boldsymbol{\rho},\p,\lambda]=0,\ \ \  \forall \ n=1\cdots k,\\
		\\
		\partial_\lambda \mathcal{L}[\boldsymbol{\rho},\p,\lambda]=0.
	\end{cases}
\end{equation}
Using the explicit form (\ref{hyperplaneEq}), and the fact that $p_i$'s are nonzero to eliminate the $p_n$ from the denominator of the second equation, these equations become
\begin{equation}
	\label{EnvExtEqs}
	\begin{cases}
	\sum_{i,j=1}^k\sqrt{p_i p_j} \Delta_T(\rho)_{ij}=0,\\
	\\
		\sum_{i=1}^k \Delta_T(\rho)_{ni}\sqrt{p_i}- \lambda \sqrt{p_n}=0,\hspace{1cm} n=1...k.\\
		\\
		\sum_{i=1}^k  p_i=1.
	
	\end{cases}
\end{equation}
 In matrix form  and noting the vector ${\bf p}=(\sqrt{p_1},\sqrt{p_2},\cdots \sqrt{p_k})^T$, these equations read
\begin{equation}\label{eqset:envelopematrixform}
	\begin{cases}
			\p^{\mathbf{T}}\D_T(\rho)\p=0,\\
			\\
		\D_T(\rho) \p=\lambda \p,\\
		\\
		\p^{\mathbf{T}}\p=1.
		\end{cases}
\end{equation}
By comparing the first and last equations, it becomes evident that $\lambda=0$.
The above equations are satisfied if $\det\D_T(\rho)=0$. This very equation describes the surface that all the witnesses (\ref{hyperplaneEq}) are tangent to. Therefore the equation $\det\D_T(\rho)=0$ defines the equation of the envelope. Note that we have already shown in Theorem \ref{thm1} that the matrix $\D_T(\rho)$ is positive semidefinite for all separable states. 
\begin{figure}[H]
	\centering
	\includegraphics[width=14cm]{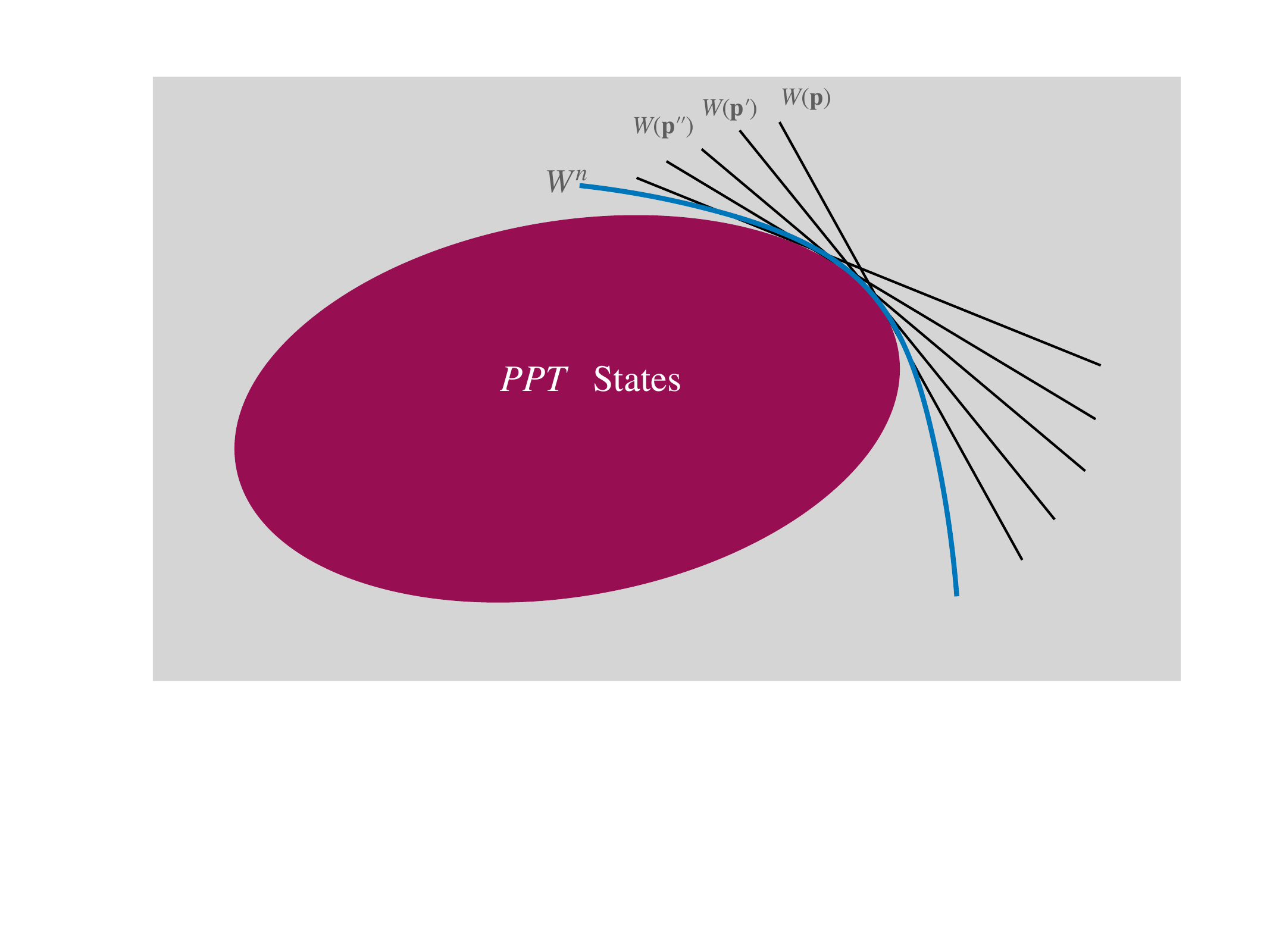}\vspace{-2.5cm}
	\caption{Schematic image of the set of bipartite quantum states $\mathcal{D}(\mathcal{H}_{AB})$, set of separable states, and the set of PPT (positive partial transpose) entangled states. The lines represent some witnesses belonging to the family $W_T(\p)$, parametrized by the set of positive coefficients $\{p_i\}$, given via the equation $\tr(W_T(\p) \rho)=0$. The envelope of this family yields the equation $F_T(\rho)\equiv\det\D_T(\rho)=0$, which is the surface that is tangent to all the hyperplanes  $\tr(W_T(\p)\rho)=0$. }
	\label{fig:EnvImg}
\end{figure}

\ni Therefore we find that if $\Delta_T(\rho)<0$ then $\rho$ should be entangled. This is a nonlinear entanglement witness that is constructed from the envelope of hyperplanes constructed from the linear EW's and is more powerful than each of these linear witnesses. Moreover, since in each family, the optimization has been done only on the coefficients of the Schmidt coefficients and not the Schmidt bases, this nonlinear witness requires the same type of local measurements as the linear EW's.
It is worth noting the explicit form of the matrix  $\Delta_T(\rho)$, whose determinant, being negative, certifies the entanglement of the state.

\subsection{The effect of change of basis} Suppose that we choose a different basis and calculate  $F_T'(\rho)=\det \Delta_T'(\rho)$, where $\Delta'_T(\rho)_{i,j}:=\Delta_T(\rho)_{i',j'}$, in which the bases of $\mathcal{H}_A$ and $\mathcal{H}_B$ are related to the old ones as $|e'_i\ra=U|e_i\ra$ and $|f'_j\ra=V|f_j\ra$, where $U$ and $V$ are unitary operators. Then one finds that $\Delta'_T(\rho)=\Delta_T ((U^\dagger\otimes V^\dagger)\rho (U\otimes V))$ and hence $F_T'(\rho)=F_T((U^\dagger\otimes V^\dagger)\rho (U\otimes V))$. This means that while $F$ detects the entanglement of a set of states $S(F_T):=\{\rho| F_T(\rho)<0\}$, $F'$ detects the entanglement of the locally equivalent states $S(F'_T)=(U^\dagger\otimes V^\dagger)S(F_T)(U\otimes V)$. \\

\section{Nonlinear witnesses from general positive maps}\label{section:generalization}
The nonlinear EW's that we proposed in previous sections are derived from the linear ones corresponding to transposition maps. 
 We now generalize these results to any arbitrary entanglement witness constructed from a general PnCP map $\Lambda$. It is well known that there is a one-to-one correspondence between PnCP maps and entanglement witnesses, that is,  any entanglement witness $W$ which detects entanglement of states in $\mathcal{L}^+(\mathcal{H}_A\otimes \mathcal{H}_B)$ is nothing but the Choi matrix of a PnCP $\Lambda$. An alternative way for this correspondence is that a witness can be written as \cite{guhne2009entanglement} 
 \begin{equation}\label{generalfamilywitness}
	W_\L=\mathbb{I}\otimes\Lambda^\dagger\ket{\psi}\bra{\psi},
\end{equation}
for $\ket{\psi}\in \mathcal{H}_A\otimes\mathcal{H}_A$. We adopt this alternative correspondence and, as in Sec. \ref{section:geometric}, we fix the map $\Lambda^\dagger:\mathcal{L}(\mathcal{H}_A)\to\mathcal{L}(\mathcal{H}_B)$ \cite{Horodecki1996,chruscinski2014entanglement} and vary the parameters  $p_i$ of $|\psi(\p)\ra$
\be
|\psi(\p)\ra=\sum_{i=1}^k \sqrt{p_i}|e_i\ra_A |e'_i\ra_A,
\ee
while keeping the Schmidt bases $|e_i,e'_i\ra$ fixed. Therefore, similar to the case in Sec. \ref{section:geometric},  we denote the entanglement witness as 
\begin{equation}\label{generalfamilywitnessexplicit}
	W_\L(\p)=\sum_{i,j=1}^k \sqrt{p_ip_j}|e_i\ra\la e_j|\otimes \Lambda^\dagger(|e'_i\ra\la e'_j|),
\end{equation}
The equation $\tr(W_\L(\p)\rho)=0$ defines a hyperplane with equation
\begin{equation}\label{eq:trWgeneral}
	\tr(W_\L(\p)\rho)=\sum_{i,j}\sum_{l,m}\sqrt{p_i p_j}\rho_{jl,im}\Lambda^{\dagger}_{ml,ij}=0,
\end{equation}
where 
$$(\Lambda^\dagger)_{ml,ij}:= \bra{f_m}\Lambda^\dagger(\ket{e'_i}\bra{e'_j})\ket{f_l}.$$
Symmetrizing the right-hand side of (\ref{eq:trWgeneral}) over $i$ and $j$ and defining 
\begin{equation}\label{DeltaijDualmap}
	(\Delta_\L)_{i,j}(\rho):=\frac{1}{2}\sum_{l,m} \Big(\rho_{jl,im} (\Lambda^\dagger)_{ml,ij}+\rho_{il,jm}(\Lambda^\dagger)_{ml,ji}\Big)=\sum_{l,m}\Re(\rho_{il,jm} (\Lambda^\dagger)_{ml,ji}).
\end{equation}
we find for the equation of the hyperplane
\be
	\tr(W_\Lambda({\bf p})\rho)=\sum_{i,j}\sqrt{p_ip_j}({\Delta_\Lambda})_{ij}(\rho)=0.
	\ee
Finding the envelope of all these hyperplanes when we vary the parameters $p_i$, we find the same equations as in (\ref{eqset:envelopematrixform}), with $\Delta_T$ replaced with $\Delta_\Lambda$, i.e.
\begin{equation}
	\begin{cases}
		\p^{\mathbf{T}}\D_\Lambda(\rho)\p=0,\\
		\\
			\D_\Lambda(\rho) \p=\lambda \p,\\
		\\
			\p^{\mathbf{T}}\p=1,
	\end{cases}
\end{equation}
leading to the nonlinear witness
\begin{equation}\label{generalNLEW}
	F_\Lambda(\rho)=\det \Delta_\L(\rho).
\end{equation}
Negativity of this quantity is a witness of entanglement of the state $\rho$. It remains to show that $\Delta_\L(\rho)$ is non-negative for all separable states. This can be done similarly to what we did for the transpose map $\Delta_T$ in Sec. \ref{section:nonlinearwitness}. We have
\begin{theorem}
	  Let $\rho\in \mathcal{D}(\mathcal{H}_A\otimes \mathcal{H}_B)$
      be bipartite density matrix, and let $\Lambda:\mathcal{L}(\mathcal{H}_B)\lo \mathcal{L}(\mathcal{H}_A)$ be a PnCP map. The basis states of the two spaces are the same as in Theorem \ref{thm1}. Define a matrix $\Delta_\L(\rho)\in M_d$ with the following entries:
	$
	\Delta_\L(\rho)_{ij}
	$
	defined in (\ref{DeltaijDualmap}), is semidefinite positive, for any separable state $\rho$. 
	 Therefore negativity of $\Delta_\L(\rho) $ is a nonlinear witness of the entanglement of $\rho$.\\
\end{theorem}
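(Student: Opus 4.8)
The plan is to follow the template of the proof of Theorem~\ref{thm1}. By linearity of the map $\rho\mapsto\Delta_\Lambda(\rho)$ (evident from~(\ref{DeltaijDualmap})) together with the convex decomposition $\rho=\sum_k p_k\ket{a_k}\bra{a_k}\otimes\ket{b_k}\bra{b_k}$ of any separable state, it suffices to establish $\Delta_\Lambda(\sigma)\succeq 0$ for a single pure product state $\sigma=\ket a\bra a\otimes\ket b\bra b$; the general case then follows as a non-negative combination. The one structural input beyond Theorem~\ref{thm1} is the positivity of $\Lambda$: being a PnCP map it sends positive operators to positive operators, so $M:=\Lambda(\ket b\bra b)\succeq 0$, and hence $\ket a\bra a\otimes M\succeq 0$.

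First I would record the operator identity implicit in the construction, $\tr(W_\Lambda(\p)\sigma)=\bra{\psi(\p)}(\mathbb I\otimes\Lambda)(\sigma)\ket{\psi(\p)}$, which for a product state reduces to $\bra{\psi(\p)}(\ket a\bra a\otimes M)\ket{\psi(\p)}$. Writing $\ket{\psi(\p)}=\sum_i\sqrt{p_i}\ket{e_i,f_i}$ and expanding, this equals $\sum_{i,j}\sqrt{p_ip_j}\,A_{ij}$ with $A_{ij}=\langle e_i|a\rangle\langle a|e_j\rangle\langle f_i|M|f_j\rangle$. Comparing with $\tr(W_\Lambda(\p)\sigma)=\sum_{i,j}\sqrt{p_ip_j}(\Delta_\Lambda)_{ij}(\sigma)$, and using that $(\Delta_\Lambda)_{ij}$ is real and symmetric by~(\ref{DeltaijDualmap}), one reads off that $(\Delta_\Lambda)_{ij}(\sigma)$ is precisely the symmetrized real part of $A_{ij}$.

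The decisive step is to evaluate the quadratic form of $\Delta_\Lambda(\sigma)$ on an arbitrary real vector $\mathbf v$ rather than on the probability weights $\sqrt{p_i}$. Promoting $\sqrt{p_i}\to v_i$ and setting $\ket{\psi_v}=\sum_i v_i\ket{e_i,f_i}$, the symmetry of $v_iv_j$ and the reality of the full sum give $\mathbf v^{\mathbf T}\Delta_\Lambda(\sigma)\mathbf v=\sum_{i,j}v_iv_jA_{ij}=\bra{\psi_v}(\ket a\bra a\otimes M)\ket{\psi_v}\ge 0$, the inequality holding because $\ket a\bra a\otimes M\succeq 0$. Hence $\Delta_\Lambda(\sigma)\succeq 0$, and averaging over the ensemble $\{p_k\}$ yields $\Delta_\Lambda(\rho)\succeq 0$ for every separable $\rho$, so negativity of $\Delta_\Lambda(\rho)$ witnesses entanglement.

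I expect the main obstacle to be exactly this last extension from the probability simplex to all of $\mathbb R^d$. The envelope construction only guarantees $\mathbf q^{\mathbf T}\Delta_\Lambda(\sigma)\mathbf q\ge 0$ for $\mathbf q=(\sqrt{p_i})$ lying in the non-negative orthant, and copositivity on that orthant is strictly weaker than positive semi-definiteness (for instance the off-diagonal matrix $\left(\begin{smallmatrix}0&1\\1&0\end{smallmatrix}\right)$ is copositive but indefinite). What rescues the argument, mirroring the way Theorem~\ref{thm1} rewrites $\Delta_T(\sigma)$ as an explicit sum of rank-one terms, is that the quadratic form does not merely happen to be non-negative on the orthant but literally equals the expectation value $\bra{\psi_v}(\ket a\bra a\otimes M)\ket{\psi_v}$ of a fixed positive operator, manifestly non-negative for every real choice of coefficients. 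Care must be taken that the symmetrization and the $\Re(\cdot)$ in~(\ref{DeltaijDualmap}) are consistent with this identification, and that the index bookkeeping relating the $\{e_i\}$ and $\{f_i\}$ bases under $\mathcal H_A\cong\mathcal H_B$ is handled correctly; these are routine once the operator identity is in place.
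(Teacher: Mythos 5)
Your proof is correct and follows essentially the same route as the paper's: reduce to a pure product state $\sigma=\ket{a}\bra{a}\otimes\ket{b}\bra{b}$ by convexity and linearity of $\rho\mapsto\Delta_\Lambda(\rho)$, then exhibit the quadratic form of $\Delta_\Lambda(\sigma)$ as the expectation value of a manifestly positive operator produced by feeding a rank-one projector through the positive map. The only difference is which side of the duality you compute on — you apply $\Lambda$ to the fixed $\ket{b}\bra{b}$ to get $\ket{a}\bra{a}\otimes M\succeq 0$ and evaluate at the test-dependent $\ket{\psi_v}$, whereas the paper applies $\Lambda^\dagger$ to the test-dependent projector $\ket{x_v}\bra{x_v}$ and evaluates at the fixed $\ket{b}$ — and your explicit remark that the identity holds for all real $\mathbf{v}$ (not merely on the non-negative orthant, so one gets genuine positive semi-definiteness rather than copositivity) makes precise a point the paper leaves implicit.
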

\begin{proof}
	The same argument as in the proof of Theorem \ref{thm1},  applies here, with minor adjustment. Namely, we take a separable product state $\sigma=|a\ra\la a|\otimes |b\ra\la b|$ and find from (\ref{DeltaijDualmap})
  \begin{equation}
	\Delta_\L(\sigma)_{i,j} = \frac{1}{2}\sum_{l,m} \Big (a^*_j a_i b^*_l b_m \bra{f_m}\Lambda^\dagger(\ket{e'_i}\bra{e'_j})\ket{e_l} + a^*_i a_j b^*_l b_m \bra{f_m}\Lambda^\dagger(\ket{e'_j}\bra{e'_i})\ket{f_l}\Big ),
\end{equation}
or by defining $|x_i\ra:=a_i |e'_i\ra$ and $|b\ra:=\sum_l b^*_l|f_l\ra$,
  \begin{equation}
	\Delta_\L(\sigma)_{i,j} = \frac{1}{2} \Big(\bra{b}\Lambda^\dagger(\ket{x_i}\bra{x_j})\ket{b} + \bra{b}\Lambda^\dagger(\ket{x_j}\bra{x_i})\ket{b}\Big).
\end{equation}
Therefore for any vector $|v\ra$, we find
\begin{align}
	\bra{v}\D_\Lambda(\sigma)\ket{v} &= \frac{1}{2} \Big(\bra{b}\Lambda^\dagger(\ket{x_v}\bra{x_v})\ket{b} + \bra{b}\Lambda^\dagger(\ket{x_v}\bra{x_v})\ket{b}\Big)\geq 0,
\end{align}
where $|x_v\ra:=\sum_i v_i |x_i\ra$. 
\end{proof}
\ni The same argument as in Sec. \ref{section:nonlinearwitness}, in the case of witnesses $\Delta_T$
, are valid here. That is if $\Delta_\L(\rho)<0$, this means that  the state $\rho$ is entangled. This leads to a whole set of nonlinear witnesses $F^S_{[ij]}:=\det(\Delta_\L([\rho]^S_{ij}))$, where $[\rho]^S_{ij}$ is the principal submatrix of $\rho$ pertaining to the square block $(i,j)$.

\section{EXAMPLES: QUBIT STATES}\label{section:examples:qubit}
In this section, we apply the nonlinear witnesses defined in previous sections to a few families of bi-partite density matrices for two qubits.

\subsection{Nonlinearizing the family of witnesses that require minimal set of local measurements}
In  \cite{riccardi2020optimal}, it was shown that two-qubit optimal EWs, of the form $W=|\psi\ra\la \psi|^\Gamma$ which require a limited set of measurements (i.e. measurements of the form $\sigma_a\otimes \sigma_a, \  a=x,y,z$) are restricted to the following six families: 

\begin{equation}\label{eq:allEquations}
	\begin{aligned}
		\ket{\psi_{1}} &= a\ket{\phi^{+}}+b\ket{\phi^{-}}; \quad & \ket{\psi_{2}} &= a\ket{\psi^{+}}+b\ket{\psi^{-}};  \\
		\ket{\psi_{3}} &= a\ket{\phi^{+}}+b\ket{\psi^{+}}; \quad & \ket{\psi_{4}} &= a\ket{\phi^{-}}+b\ket{\psi^{-}}; \\
		\ket{\psi_{5}} &= a\ket{\phi^{+}}+ib\ket{\psi^{-}}; \quad & \ket{\psi_{6}} &= a\ket{\phi^{-}}+ib\ket{\psi^{+}},
	\end{aligned}
\end{equation}

One can readily see that this property is due to the nature of the Schmidt decomposition of these six families of states. In fact, a simple calculation shows that 
these states have the following Schmidt decomposition
\begin{equation}\label{eq:allEquations1}
	\begin{aligned}
		\ket{\psi_{1}} &= \a\ket{z_+,z_+}+\b\ket{z_-,z_-}; \quad & \ket{\psi_{2}} &= \a\ket{z_+,z_-}+\b\ket{z_-,z_+};  \\
		\ket{\psi_{3}} &= \a\ket{x_+,x_+}+\b\ket{x_-,x_-}; \quad & \ket{\psi_{4}} &= \a\ket{x_-,x_+}+\b\ket{x_+,x_-}; \\
		\ket{\psi_{5}} &= \a\ket{y_-,y_+}+\b\ket{y_+,y_-}; \quad & \ket{\psi_{6}} &= \a\ket{y_+,y_+}+\b\ket{y_-,y_-},
	\end{aligned}
\end{equation}
where $\a=\frac{a+b}{\sqrt{2}}$,  $\beta=\frac{a-b}{\sqrt{2}}$,  $|z_\pm\ra=\{|0\ra,|1\ra\}$ are eigenstates of the $\sigma_z$ operator, $|x_\pm\ra=\frac{1}{\sqrt{2}}\{|0\ra\pm |1\ra\}$ are eigenstates of the $\sigma_x$ operator,and $|y_\pm\ra=\frac{1}{\sqrt{2}}\{|0\ra\pm i|1\ra\}$ are eigenstates of the $\sigma_y$ operator.
This explains why witnesses of the form $W_i\equiv |\psi_i\ra\la \psi_i|^\Gamma$ require only a limited set of measurements of diagonal operators $\sigma_a\otimes\sigma_a,\ a=x,y,z$. Following the construction which led from the family of linear witnesses (\ref{linearextwitness}) to the nonlinear witnesses in Theorem \ref{thm1}, shows that the nonlinearization of each of these witnesses respectively leads to the corresponding nonlinear witness. For example the first  family of linear witness related to $|\psi_1\ra=\sqrt{p_1}|0,0\ra+\sqrt{p_2} |1,1\ra$ corresponding to the Schmidt basis
\be
|e_0\ra=|0\ra, \ \  |e_1\ra=|1\ra, \ \  |f_0\ra=|0\ra, \ \  |f_1\ra=|1\ra, \ \ 
\ee
lead to the following: 
\be
\Delta^1_T(\rho)=\begin{pmatrix}\rho_{00,00}&\Re\rho_{01,10}\\ \Re\rho_{10,01}&\rho_{11,11}\end{pmatrix}
\ee
and  
\be\label{eq:F1 T}
F_T^{(1)}(\rho) = \rho_{00,00}\rho_{11,11}-\Re(\rho_{01,10})^2.   
\ee

On the other hand,  the second  family of linear witness related to $|\psi_2\ra=\sqrt{p_1}|0,1\ra+\sqrt{p_2} |1,0\ra$ corresponding to the Schmidt basis
\be
|e_0\ra=|0\ra, \ \  |e_1\ra=|1\ra, \ \  |f_0\ra=|1\ra, \ \  |f_1\ra=|0\ra, \ \ 
\ee
lead to the following: 
\be
\Delta^2_T(\rho)=\begin{pmatrix}\rho_{01,01}&\Re(\rho_{00,11})\\ \Re(\rho_{11,00})&\rho_{10,10}\end{pmatrix}
\ee

and  

\be\label{eq: F2 T}
F_T^{(2)}(\rho) = \rho_{01,01}\rho_{10,10}-\Re(\rho_{00,11})^2.   
\ee
\color{black}

\ni 
When we are dealing with a two-qubit state, i.e. when $\text{dim}(\mathcal{H}_A)=\text{dim}(\mathcal{H}_B)=2$, the nonlinear EW's are given by 
\begin{equation}
	F^{(i)}_T = \det \Delta^{i}_T(\rho),
\end{equation}
for a $2\times 2$ matrix $\Delta^i_T(\rho)$.
These nonlinear witnesses require measurement of matrix elements of the density matrix in a certain basis, and their measurement is the same as the one demanded by the corresponding linear witness. Only the final statistical analysis is different. For example for the first two witnesses in (\ref{eq:allEquations}), we require the following elements and measurements: 
\ba
\rho_{00,00}&=&\frac{1}{4}\tr((\mathbb{I}+\sigma_z)\otimes (\mathbb{I}+\sigma_z)\rho)\cr
\rho_{11,11}&=&\frac{1}{4}\tr((\mathbb{I}-\sigma_z)\otimes (\mathbb{I}-\sigma_z)\rho)\cr
\Re(\rho_{01,10})&=&\frac{1}{2}\tr((\sigma_x\otimes\sigma_x+\sigma_y\otimes \sigma_y)\rho),
\ea
all of them requiring measurement of diagonal Pauli operators $\sigma_a\otimes \sigma_a,\ \ a=x,y,z$. 	Note that measurement of $\mathbb{I}\otimes \sigma_a$ or $\sigma_a\otimes \mathbb{I}\ \ \ a\in \{x, y, z\}$, is performed by measurement of $\sigma_a\otimes \sigma_a$ and neglecting the statistics obtained by one side. In this sense, only a limited set of measurements is required for this witness.\\

\ni When $d>2$, we note that the positivity of $\Delta_T(\rho)$ implies the positivity of all its minors, which is the determinant of all its principal submatrices \cite{johnson1990matrix} (those obtained by removing the same rows and columns from the original matrix). Therefore if any of these minors are negative, it is a witness of the entanglement of the state. The simplest minors are the two-dimensional ones. Therefore for any pair of indices $[i,j]\in \{1,\cdots k\}$, we have a witness ${F^{[ij]}_T}$ defined as 
\be
{F^{[ij]}_T}(\rho)=\rho_{ii,ii}\rho_{jj,jj}-\Re(\rho_{ij,ji})^2.
\ee
\ni Measuring any of the witnesses $F^{[ij]}_T$ (or those corresponding to larger minors) requires measurements of the same diagonal set of Pauli operators, embedded in the corresponding block $(i,j)$, i.e. measurements of the Gell-Mann matrices with support in the block $(i,j)$, $X_{ij}\otimes X_{ij}$, $Y_{ij}\otimes Y_{ij}$ and $Z_{ij}\otimes Z_{ij}$, where
\be
X_{ij}=|i\ra\la j|+|j\ra\la i|, \    Y_{ij}=-i(|i\ra\la j|-|j\ra\la i|), \   Z_{ij}=|i\ra\la i |-|j\ra\la j|. 
\ee
\ni 
The advantage of measurement of different entanglement witnesses ${F^{[ij]}_T}$ is that one can detect which part of the state, i.e. which basis states, are responsible for entanglement of the whole state $\rho$. Note that one can measure larger minors which in turn determine if a larger part of the state (a larger set of basis states) is responsible for the entanglement of the whole state.  \\

\subsection{ Convex combination of Bell states}\label{example: Bell states}
It is well known that no linear entanglement witness can detect the entanglement of more than one Bell state \cite{ioannou2006quantum}. The proof in \cite{ioannou2006quantum} is simple and beautiful: If there exists an entanglement witness $W$ such that for two Bell states $|\phi_1\ra$, and $|\phi_2\ra$, $\la \phi_{1,2}|W_T|\phi_{1,2}\ra=-\epsilon_{1,2}<0$, then one can construct two separable pure states of the form $|\pm\ra=\frac{1}{\sqrt{2}}(|\phi_1\ra\pm |\phi_2\ra)$ such that $\la +|W_T|+\ra$ and $\la -|W_T|-\ra$ cannot be both positive, leading to a contradiction. Intuitively, this theorem can be understood by noting the fact that 
the mixed state $\rho=\frac{1}{2}(|\phi_1\ra\la \phi_1|+|\phi_2\ra\la \phi_2|)$ is always separable. In fact, as shown in Fig. \ref{fig: Bell}, a line representing a convex combination of these two Bell states always touches the set of separable states. 
\begin{figure}[H]
	\centering
	\includegraphics[width=15cm]{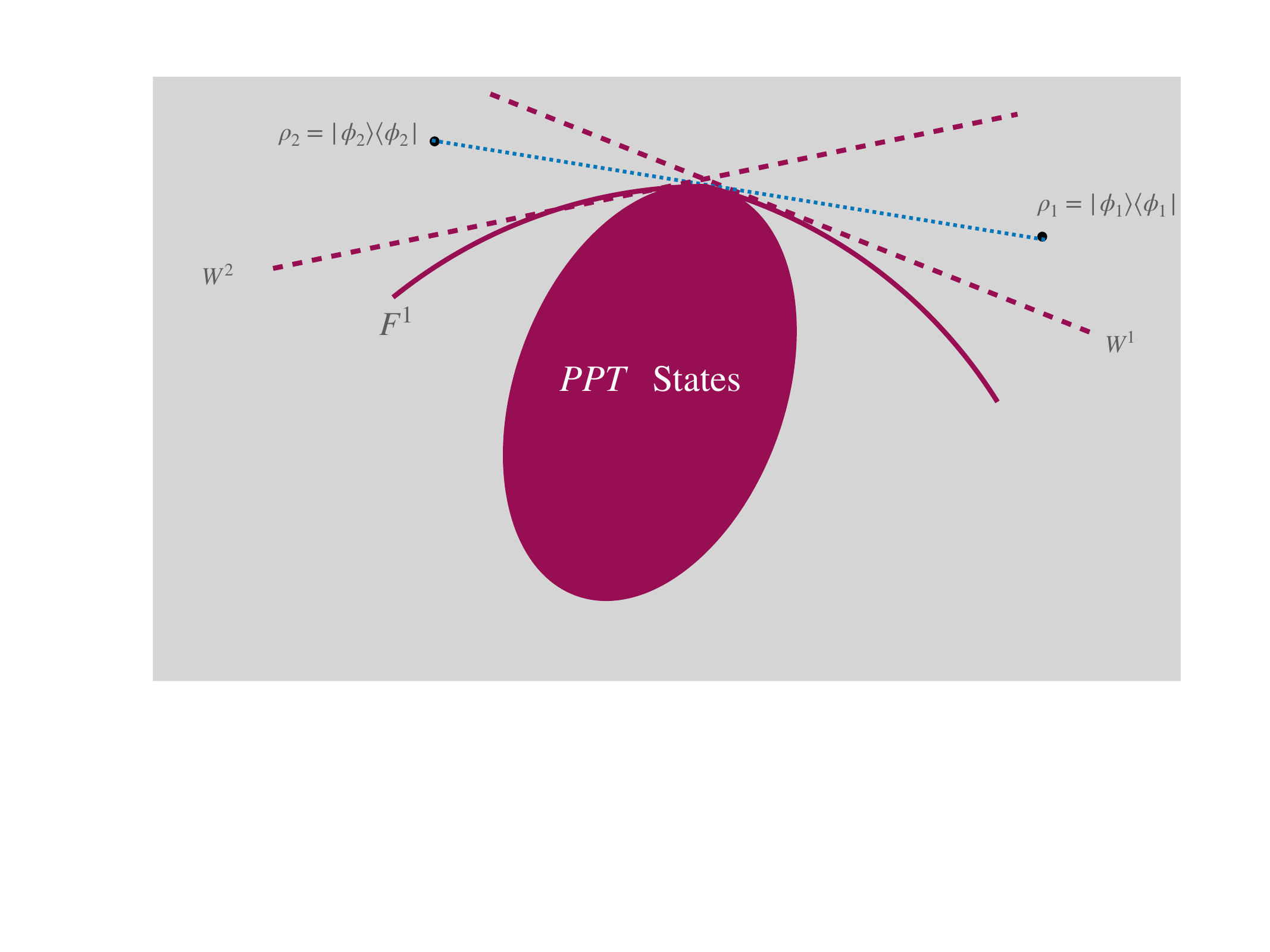}\vspace{-3cm}
	\caption{The linear EW, $W_1$ can detect the entanglement of the Bell state $|\phi_1\ra$, but fails to detect that of $|\phi_2\ra$. Similarly, the linear EW, $W_2$ can detect the entanglement of the Bell state $|\phi_2\ra$, but fails to detect that of $|\phi_1\ra$. A nonlinear EW $F_T$ can detect the entanglement of both $|\phi_1\ra$ and $|\phi_2\ra$ and their convex combinations. }
	\label{fig: Bell}
\end{figure} 

\ni For this reason, as seen in this figure, any linear EW which detects one of them, fails to detect the other. In fact, any linear witness can detect the entanglement of a convex combination of two Bell states, $\rho = x|\phi_1\ra\la \phi_1|+(1-x)|\phi_2\ra\la \phi_2|$, only for part of the interval of the parameter $x$. This is clearly depicted in Fig. \ref{fig: Bell}.\\

\ni As another example of a convex combination of two Bell states, consider the  state
\be
\rho=x|\psi^+\ra\la \psi^+|+(1-x)|\psi^-\ra\la \psi^-|=\frac{1}{2}\begin{pmatrix}0&0&0&0\\ 0 &1 &2x-1&0 \\ 0 & 2x-1&1&0\\ 0&0&0&0\end{pmatrix}.
\ee
For a linear EW like $W_T^+=|\phi^+\ra\la \phi^+|^\Gamma$, we have 
\be
\tr(W_T^+\rho)=(x-\frac{1}{2})
\ee
which shows that $W_T^+$ can detect the entanglement of $\rho$ only for $x\in[0,\frac{1}{2})$. On the other hand, for a witness $W_T^-=|\phi^-\ra\la \phi^-|^\G$, we have 
\be
\tr(W_T^-\rho)=(\frac{1}{2}-x)
\ee
which shows that $W^-$ can detect the entanglement of $\rho$ only for $x\in (\frac{1}{2},1]$.
However, a nonlinear EW can detect the entanglement of this state $\rho$ for the whole parameter range of $x\in[0,1]$. Intuitively this is done by bending the two hyperplanes, corresponding to the linear witnesses $W_1$ and $W_2$, and making them into a curved surface,  as shown in Fig. \ref{fig: Bell}. In explicit matrix form, we have 
\be
F^{(1)}_T(\rho)= \rho_{00,00}\rho_{11,11}-\Re(\rho_{01,10})^2=-\frac{1}{4}(2x-1)^2
\ee
which clearly shows its superiority over the two linear witnesses. \\

\ni The question arises whether or not the convex combination of other Bell states can also be detected by a suitable nonlinear witness. The answer is in fact positive. 
From what we learnt on change of the effect of change of basis, and the fact that all Bell states are transformed to each other by local unitary transformations, we find these states are also detected by suitable nonlinear EW's by a simple change of basis. As an example, consider 
$$\sigma=x|\phi^+\ra\la \phi^+|+(1-x)|\phi^-\ra\la \phi^-|=\frac{1}{2}\begin{pmatrix}1&0&0&2x-1\\ 0 &0 &0&0 \\ 0 & 0&0&0\\ 2x-1&0&0&1\end{pmatrix}.$$ 
Since $|\phi^\pm\ra=(I\otimes \sigma_x)|\psi^\pm\ra$, we find that $$F_T^{(2)}(\sigma)=\sigma_{01,01}\sigma_{10,10}-\Re(\sigma_{00,11})^2=-\frac{1}{4}(2x-1)^2,$$
again showing that this state is entangled for all values of $x$ except for $x=\frac{1}{2}$. A similar argument also applies to other combinations of Bell states.\\

\subsection{ The output of an  amplitude channel}
It is now instructive to consider the power of linear EW's with a nonlinear EW for an interesting example, studied in Ref. \cite{riccardi2020optimal}. Consider the amplitude-damping channel when acting on one qubit of the state $|\phi^+\ra$, and producing the state $\rho=(\mathbb{I}\otimes {\cal E})|\phi^+\ra\la \phi^+|$, where ${\cal E}(\rho)=A_0\rho A_0^\dagger + A_1\rho A_1^\dagger$, with $A_0=\begin{pmatrix}1&0\\ 0 &\sqrt{1-\gamma}\end{pmatrix}$ and $A_1=\begin{pmatrix}0&\sqrt{\gamma}\\ 0 &0\end{pmatrix}$ and $0\leq \gamma\leq 1$. One finds that
\be
\rho=\frac{1}{2}\begin{pmatrix}1&0&0&\sqrt{1-\gamma}\\ 0 &0&0&0\\ 0 &0&\gamma&0\\ \sqrt{1-\gamma}&0&0&1-\gamma \end{pmatrix}.
\ee
Peres's criterion \cite{peres1996separability} shows that this state is entangled for $0\leq \gamma<1$, that is, for the whole range of the parameter $\gamma$, except at $\gamma=1$, when the amplitude-damping channel acts as the identity channel. In Ref. \cite{riccardi2020optimal}, it has been shown that the linear witnesses for the form $W=|\psi\ra\la\psi|^\Gamma$, where $|\psi\ra=a|\psi^+\ra+b|\psi^-\ra$ (with $a^2+b^2=1$) can only detect entanglement of this state provided that one chooses the range of the parameter $a$ appropriately. For example, it has been shown if $\gamma = 0.9$, then $\tr(W\rho)<0, $ only if $0.17<a< \frac{1}{\sqrt{2}}$, while if $\gamma=0.95$,  then $\tr(W^2\rho)<0, $ only if $0.38<a< \frac{1}{\sqrt{2}}$. However, we can detect the entanglement of this state for all values of $\gamma$ by using one single nonlinear EW, namely $F^{(2)}(\rho)$, which is the nonlinearized form of the witness $W_2=\ket{\phi_2}\bra{\phi_2}^\G$ with $\ket{\phi_2}=\sqrt{p_1}|01\ra+\sqrt{p_2}|10\ra$.  This is seen simply by noting that 
\be
F^{(2)}_T(\rho)=\rho_{01,01}\rho_{10,10}-(\Re{\rho_{00,11}})^2=\frac{1}{4}(\gamma-1),
\ee
which is always negative when $\gamma<1$.  

\subsection{ Randomly generated entangled states }\label{section:Numcompare}
In this section we compare the detection power of an optimal linear witness of the form $W = \ket{\psi_1}\bra{\psi_1}^\Gamma$, with $\ket{\psi_1}$ given by (\ref{eq:allEquations})  
	$
	\ket{\psi_1} = a \ket{\phi^+} + b \ket{\phi^-},
	$  with its nonlinear counterpart. 
	To this end, we start by 
	randomly selecting complex square matrices $X$ from a Ginibre ensemble \cite{zyczkowski2011generating,johansson2012qutip}, and construct random matrices  as $\rho_r := \frac{XX^\dagger}{\tr(XX^\dagger)}$.  The measure induced by the Ginibre ensemble, coincides with the Hilbert-Schmidt measure on density matrices \cite{sommers2004statistical}. In order to select only entangled states, we apply the criterion of positive partial transpose \cite{peres1996separability} (which is necessary and sufficient in this dimension \cite{Horodecki1996}) and filter out separable states. For each value of the parameter $a$, the fraction of states detected by the linear and nonlinear entangled witnesses is determined and plotted in Fig. \ref{fig:comp}.

\begin{figure}[H]
	\centering
	\includegraphics[width=11cm]{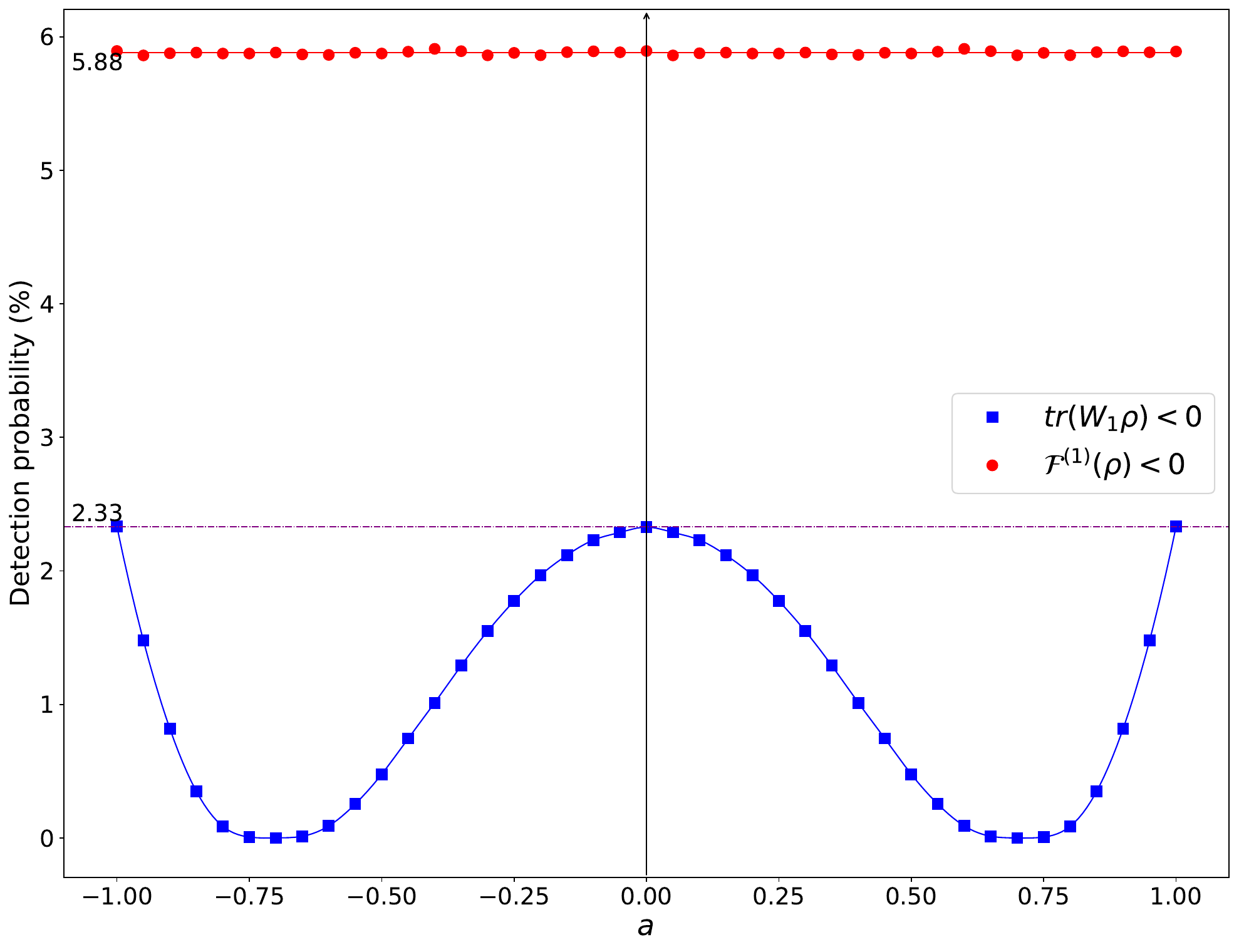}
	\caption{A comparison of detection power of entanglement witness $W_1 = \ket{\psi_1}\bra{\psi}^\Gamma$ with $\ket{\psi_1}= a\ket{\phi^+}+b\ket{\phi^-}$ (\ref{eq:allEquations}) and its nonlinear counterpart (\ref{eq:F1 T}). For each value of the parameter $a$, $6\times 10^6$ random entangled states have been examined.}
	\label{fig:comp}
\end{figure}

\section{EXAMPLES: QUTRIT STATES}\label{section:examples:qutrit}
In this section, we study two examples to show how our nonlinear witnesses are more capable of detecting entangled states than linear ones. The first one is related to the transposition map and the second to the generalized Choi map. In these two examples, the nonlinear witness $F$ takes the form of the determinant of a  $3\times 3$ matrix. 
\subsection{The nonlinear witness  derived from the transposition map}
\noindent Consider now the linear EW defined as 
$W_T=|\phi_3^+\ra\la \phi_3^+|^\Gamma$, where $|\phi_3^+\ra=\frac{1}{\sqrt{3}}(|00\ra+|11\ra+|22\ra)$ is the maximally entangled state. 
We compare the detection power of this linear EW with the nonlinear EW derived from the transposition map 
which is
\begin{equation}\label{NonlineartwoqutritNPTmaxent}
    F_T(\rho)=\det \D_T(\rho)_{3\times 3}=\det \begin{pmatrix}
        \rho_{00,00} & \Re(\rho_{01,10}) & \Re(\rho_{02,20})\\
        \Re(\rho_{01,10}) & \rho_{11,11} & \Re(\rho_{12,21})\\
        \Re(\rho_{02,20}) & \Re(\rho_{12,21}) & \rho_{22,22}
    \end{pmatrix}.
\end{equation}
To do this comparison, we use two different ensembles of density matrices, namely the Hilbert-Schmidt ensemble and the Bures ensemble. The first ensemble is produced by  
$\rho_r=\frac{HH^\dagger}{\tr(HH^\dagger)}$, where the matrix $H$ is a matrix drawn from the Ginibre ensemble \cite{bruzda2009random} and the second is produced by $\frac{(\mathbb{I}+U)HH^\dagger(\mathbb{I}+U)^\dagger}{\tr[(\mathbb{I}+U)HH^\dagger(\mathbb{I}+U)^\dagger]}$
 \cite{zyczkowski2011generating,johansson2012qutip}, where $U$ is a random Haar unitary matrix and $H$ is again drawn from a Ginibre ensemble \cite{al2010random}.
Table \ref{tab:comparetwoqutrit} compares the detection power of these two witnesses for the aforementioned ensembles. 
\begin{table}[h!]
\centering
\begin{tabular}{|c|c|c|c|}
\hline
Criterion to detect NPT states & $\tr(W_T\rho)<0$ & $F_T(\rho)<0$  \\
\hline
     Hilbert-Schmidt ensemble    &    0.12      &  1.47      \\
\hline
    Bures ensemble &  0.65 & 6.56 \\
\hline
\end{tabular}
\caption{The numbers in the last two columns show the percentage of states detected by the linear and nonlinear EWs of our $ 10^7$ two-qutrit \textit{NPT} random density matrices.}
\label{tab:comparetwoqutrit}
\end{table}

\noindent While the percentage of the states detected by both witnesses is small,  it is clear that the nonlinear witness has a significantly higher detection power compared with the linear EW.  The improvement is much larger than that of the two qubits, since the nonlinearity [the order of the polynomial derived from the determinant of $\Delta_T(\rho)$)] increases with dimension.

\color{black}
\subsection{ The  nonlinear witness derived from the generalized Choi map }
In this section,  we consider a different PnCP, construct its corresponding nonlinear EW (\ref{generalNLEW}), and compare its detection power with the corresponding linear EW.\\

\ni {\bf The generalized Choi map:} Consider a nondecomposable map on qutrit states. Choi provided the first such example \cite{choi1975completely}, which was later unified with the reduction map \cite{horodecki1999reduction,cerf1999reduction} for qutrits as the generalized Choi map $\Phi[a,b,c]$. Thus the generalized Choi map is parametrized by three real parameters $a,b$, and $c$ in Ref. \cite{chruscinski2011geometry}. Consider this generalized map whose action on an matrix $X\in M_3(\mathbb{C})$ is given by 
\begin{multline}\nonumber\label{3parametermap}
    \Phi[a,b,c]\begin{pmatrix}
        X_{0,0} & X_{0,1} & X_{0,2}\\
        X_{1,0} & X_{1,1} & X_{1,2}\\
        X_{2,0} & X_{2,1} & X_{2,2}
    \end{pmatrix}
    \\ =\mathcal{N}_{a,b,c}\begin{pmatrix}
        a X_{0,0}+b X_{1,1} + c X_{2,2} & -X_{0,1} & - X_{0,2}\\
        -X_{1,0} & c X_{0,0} + a X_{1,1} + b X_{2,2} & -X_{1,2} \\
        - X_{2,0} & -X_{2,1} & b X_{0,0} + c X_{1,1} + a X_{2,2}
      \end{pmatrix}.
\end{multline}
Here  $\mathcal{N}_{a,b,c}=\frac{1}{a+b+c}$ is a normalization factor which we will neglect hereafter since it plays no role in the construction of witnesses. The map $\Phi[a,b,c]$ is a PnCP if and only if 
\begin{enumerate}
\item  \( 0\leq a <2\),
\item  \( a+b+c\geq 2\),
\item  if \( a\leq 1\),  then \(bc\geq (1-a)^2\).
\end{enumerate}

 For $[a,b,c]=[0,1,1]$, this map reproduces the reduction map, i.e. 
 \begin{equation}
 	\Phi_R(X) = \mathbb{I}_d \tr(X)-X.
 \end{equation}
 For $[a,b,c]=[1,1,0]$ and $[1,0,1]$, it reduces to the Choi map and its dual respectively. Moreover, for $bc<\frac{(2-a)^2}{4}$ this map is indecomposable, and for $a+b+c=2$, these maps are optimal \cite{ha2011one,chruscinski2013optimal}. The boundary of this domain is parametrized by one single parameter $\theta$\cite{chruscinski2011geometry}, where $a$, $b$, and $c$ are given by 
\begin{equation}\label{parametrization}
	\begin{aligned}
		a(\theta) &= \frac{2}{3}(1+\cos (\theta)),\\
		b(\theta) &= \frac{2}{3}(1-\frac{\cos(\theta)}{2} - \frac{\sin (\theta)\sqrt{3} }{2}),\\
		c(\theta) &= \frac{2}{3}(1-\frac{\cos(\theta)}{2} + \frac{\sin (\theta)\sqrt{3} }{2}).
	\end{aligned}
\end{equation}

 \ni For a given map $\Phi[a,b,c]$, we consider a linear entanglement witness constructed by this map in the form $[\mathbb{I}\otimes \Phi[a,c,b]](\ket{\phi^+_3}\bra{\phi^+_3})$, where $|\phi^+_3\ra=\frac{1}{\sqrt{3}}(|00\ra+|11\ra+|22\ra)$ is the maximally entangled state. For general values of $[a,b,c]$, the form of this linear EW is 
 found to be 
 \begin{equation}\label{linearabc}
 	W[a,b,c]= \frac{1}{3}D[a,b,c]-\ket{\phi^+_3}\bra{\phi^+_3},
 \end{equation}
 where
 \begin{equation}
 	D[a,b,c]=\sum_{i=0}^2 \ket{i}\bra{i}\otimes \big[ (a+1)\ket{i}\bra{i}+ b \ket{i+1}\bra{i+1}+c \ket{i+2}\bra{i+2}\big].
 \end{equation} 
   To construct a nonlinear EW by our method, we take the Schmidt-decomposed entangled state as $|\phi\ra=\sqrt{p_0}|00\ra+\sqrt{p_1}|11\ra+\sqrt{p_2}|22\ra$ which upon taking the envelope leads to the following form of the matrix $\Delta_\Phi[a,b,c](\rho)$:
\begin{equation}
    \begin{aligned}
        (\Delta_\Phi[a,b,c](\rho))_{i,i} &= a \rho_{i,i,i,i} 
        + b\rho_{i,i-1,i,i-1} 
        + c\rho_{i,i-2,i,i-2}, \\
        (\Delta_\Phi[a,b,c](\rho))_{i,j} &= -\Re(\rho_{ii,jj}),\hspace{2cm}(i\neq j)
    \end{aligned}
\end{equation}
The nonlinear criterion thus becomes a violation of the positive semidefiniteness of the following $3 \times 3$ matrix
\begin{equation}\label{nonlineaabc}
    \begin{aligned}
        \Delta_\Phi[a,b,c](\rho) \\
        &= \begin{pmatrix}
            a\rho_{00,00}+c\rho_{01,01}+b\rho_{02,02} & -\Re(\rho_{00,11}) & -\Re(\rho_{00,22}) \\
            -\Re(\rho_{00,11}) & b\rho_{10,10} + a \rho_{11,11} + c \rho_{12,12} & -\Re(\rho_{11,22})\\
            -\Re(\rho_{00,22}) & -\Re(\rho_{11,22}) & c\rho_{20,20} + b \rho_{21,21} + a \rho_{22,22}
        \end{pmatrix}
    \end{aligned}
\end{equation}
We emphasize that the required measurements for both the linear and nonlinear witness can be expressed in terms of generalized Gell-Mann matrices \cite{kimura2003bloch,bertlmann2008bloch}, the measurement setup being the same for both the linear and nonlinear witnesses. 

\noindent We can now compare the detection power of linear and nonlinear witnesses for the maps $\Phi[a,b,c]$. 
To make the comparison more visible, we take the parameters $[a,b,c]$ from (\ref{parametrization}), i.e. from the boundary of the set of optimal maps. Therefore for each value of $\theta$, we have 
a linear witness $W_{\Phi[\theta]}$, and a nonlinear witness $F_{\Phi[\theta]}$. For each value of $\theta$, we compare their detection power numerically. To this end, a set of $ 10^6$ two-qutrit states is generated, belonging to the family
\begin{equation}\label{mixedfamilynumerical}
    \rho_{\psi,p}= p \ket{\psi}\bra{\psi}+\frac{1-p}{9}\mathbb{I}_{9\times 9}.
\end{equation}
where,  $\ket{\psi}$ is a randomly generated entangled state \cite{johansson2012qutip} and $p\in[0,1]$ is also randomly chosen. The detection power of the linear and nonlinear witnesses is compared in Fig. \ref{fig:comparePnCPparameters}.
\begin{figure}[H]
	\centering
	\includegraphics[width=11cm]{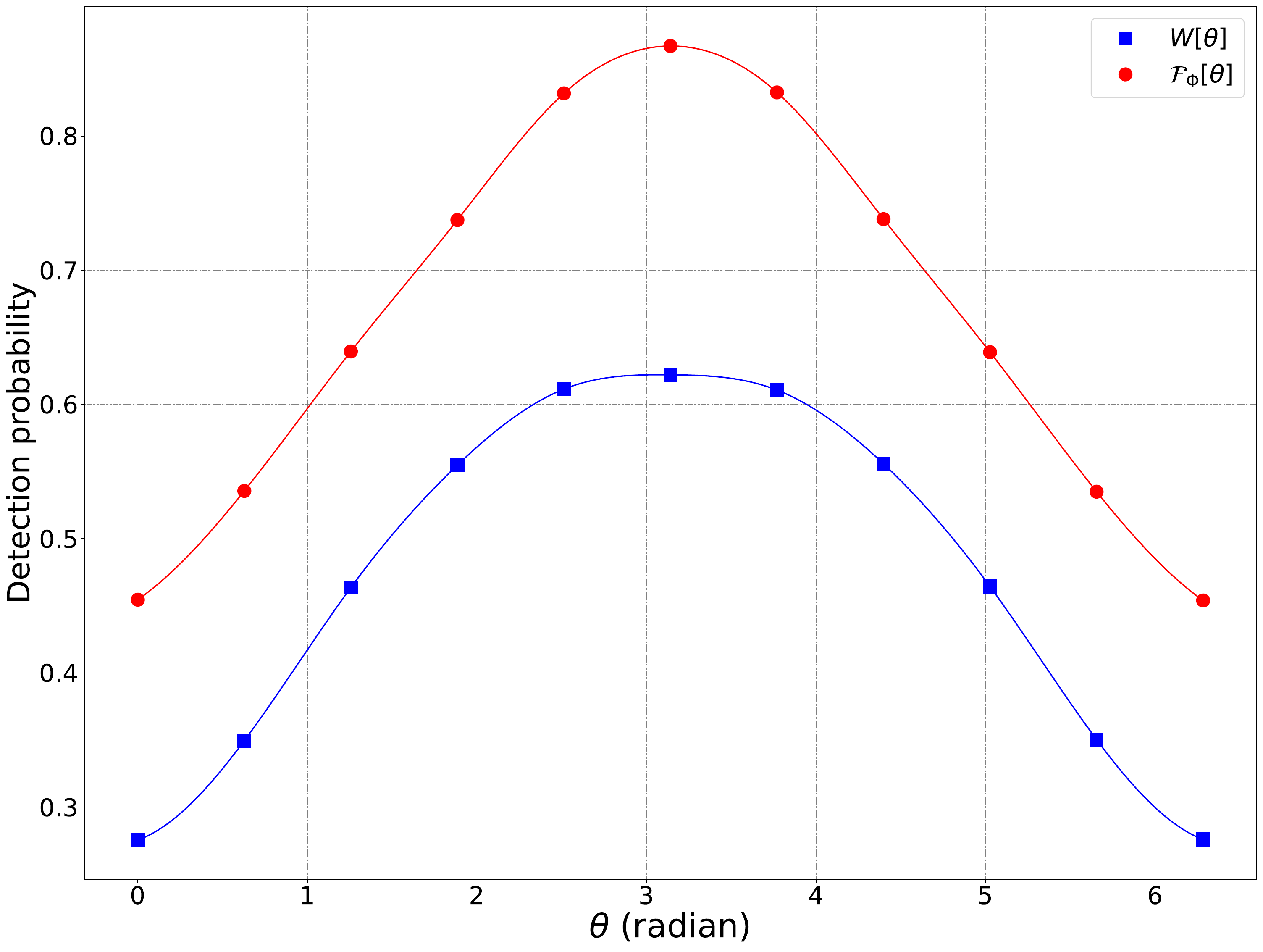}
	\caption{Comparison of the detection power of linear (\ref{linearabc}) and nonlinear witnesses (\ref{nonlineaabc}) for $10^6$ random states of the form (\ref{mixedfamilynumerical}). We filtered the states that satisfy $\tr(W_{\Phi[\theta]}\rho_{\psi,p})\geq \epsilon$ for a positive $\epsilon$ and set  $\epsilon=0.03$ to exclude most of the states that are not detected by either the linear or the nonlinear witnesses.}
	\label{fig:comparePnCPparameters}
\end{figure}

\section{DISCUSSION}
\ni In this work, we have presented a construction of nonlinear entanglement witnesses (EWs) derived as envelopes of families of linear EWs defined over pure bipartite states with fixed Schmidt bases. The resulting nonlinear criteria are formulated as the condition that a certain operator-valued matrix, built from expectation values of the density operator, is positive semidefinite. The positivity of its principal minors then yields a hierarchy of increasingly sensitive detection conditions. This construction effectively bends the hyperplanes associated with linear witnesses into nonlinear hypersurfaces that capture entangled states undetectable by any single linear witness in the generating family.\\

\ni Our method offers several conceptual and practical advantages. First, it is entirely analytical, with no reliance on state-dependent optimization or numerical fitting procedures \cite{Guhne2003,ChenWu2003}. Second, it generalizes the standard theory of linear EWs \cite{Terhal2000,Horodecki1996,lewenstein1998separability} by extending it to nonlinear hypersurfaces while preserving experimental accessibility: all the required expectation values are expressible in terms of a fixed local operator basis, such as Pauli or Gell-Mann matrices, making implementation with current quantum technologies feasible \cite{Barbieri2003}.\\

\ni We demonstrated that these nonlinear witnesses detect strictly more entangled states than any of the linear witnesses from which they are built. In both analytical examples (e.g., mixtures of Bell states) and numerical simulations (e.g., random mixed states), our approach succeeded in identifying entanglement in cases where the linear criteria failed. This enhancement aligns with the geometric intuition that nonlinear surfaces can more closely approximate the true boundary between separable and entangled states \cite{Guhne2002}.\\

\ni Our results suggest several promising directions for future research. One natural extension is to multipartite systems, where the structure of entanglement is more intricate, and conventional linear witnesses often prove inadequate. Another direction is to analyze the robustness of nonlinear EWs under noise and decoherence, which could inform their use in experimental platforms subject to imperfections \cite{Zurek2003}. Finally, it would be valuable to investigate the relation between our nonlinear conditions and those derived from semidefinite programming relaxations \cite{Brandao2005}, to further clarify the power and limitations of our approach.\\

\noindent {{\bf ACKNOWLEDEMENTS}} This research was supported in part by the Iran National Science Foundation, under Grants No. 4022322 and No. QST4040203. The authors wish to thank V. Jannessary, S. Roofeh, A. Farmanian, A. Najafzadeh, F. Rahimi,  F. Kharabat, M.H. Shahbazian and A. Salarpour for their valuable comments and suggestions.\\

\noindent {{\bf DATA AVAILABILITY}} The data that support the findings of this article are not publicly available upon publication because it is not technically feasible and/or the cost of preparing, depositing, and hosting the data would be prohibitive within the terms of this research project. The data are available from the authors upon reasonable request.

\appendix\section*{APPENDIX A: FINDING THE ENVELOPE OF A FAMILY OF PLANES}
In this appendix, we explain in an intuitive manner how the envelope of a set of planes can be found. Consider a convex set $C\subset \mathbb{R}^n$ and a continuous family of hyperplanes $\{H_t\}_{t\in \mathbb{R}}$ such that for each $t$, $C$ lies entirely on one side of $H_t$ (as indicated by the Hahn-Banach theorem). Here we are considering a one-parameter family of planes. The generalization to a multiparameter family is straightforward as we will see. A hyperplane $H_t$ in this family is defined by the linear equation
$$H_t:=\{x\in \mathbb{R}^n\ |\  n_t\cdot x=a_t\},$$
where $n_t$ is a unit normal vector pointing away from $C$. Note that $C$ lies in the half $\{x\in \mathbb{R}^n\ | \ n_t\cdot x \leq a_t\}$. The family being continuous means that the equations of two neighboring planes are given by
\ba
&&n_t\cdot x=a_t\cr
&&n_{t+\epsilon}\cdot x=a_{t+\epsilon},
\ea
which results in the set
\ba
&&n_t\cdot x-a_t=0\cr
&&\frac{d}{dt}(n_{t}\cdot x-a_{t})=0,
\ea
Solution of these two equations which is equivalent to eliminating the parameter $t$ between them, leads to a nonlinear equation for $x$ that is the equation fo the envelope. For an $r$-parameter family of planes the above two equations are replaced with the following $r+1$ equations \cite{nishimura2022hyperplane}
\ba
&&n_{{\bf p}}\cdot x-a_{{\bf p}}=0\cr
&&\frac{\partial}{\partial p_i}(n_{\bf p}\cdot x-a_{{\bf p}})=0,\ \ i=1,\cdots r.
\ea
Solution of these equations, which is equivalent to eliminating the parameters $p_i$, leads to the nonlinear equation of the envelope. In our case, where the parameters $p_i$ are subject to the normalization condition $\sum_i p_i=1$, the derivatives are calculated by using the Lagrange multiplier, hence Eqs. (\ref{lg1}) and (\ref{lg2}) of the main text.

\appendix\section*{APPENDIX B: A CLOSER LOOK AT EXAMPLE OF SEC. \ref{example: Bell states}}
Figure \ref{fig: Bell} in example Sec. \ref{example: Bell states}, shows in a qualitative way why a single linear witness of the form $|\phi\ra\la \phi|^\Gamma$ cannot detect a convex combination of two of the Bell states while a nonlinear witness can. In this appendix, we want to elaborate on this observation and have a closer look at this example. The convex combination of Bell states is of the form 
\be\rho= p_0 |\phi^+\ra\la\phi^+|+p_1 |\psi^+\ra\la\psi^+|+p_2 |\psi^-\ra\la\psi^-|+p_3 |\phi^-\ra\la\phi^-|,\ee
or in matrix form
\be\label{matrix: p}
\rho=\frac{1}{2}\begin{pmatrix}p_0+p_3 &&& p_0-p_3\\
&p_1+p_2 & p_1-p_2&\\ &p_1-p_2 & p_1+p_2&\\p_0-p_3 &&& p_0+p_3
 \end{pmatrix},
\ee
where $p_0+p_1+p_2+p_3=1$. This matrix can be written in terms of three independent parameters, if we expand it in terms of Pauli matrices:
\be
\rho=\frac{1}{4}(\mathbb{I}+x\sigma_x\otimes \sigma_x-y\sigma_y\otimes \sigma_y+z\sigma_z\otimes \sigma_z)
\ee
Note that the minus sign for $y$ is just a matter of convention for later simplicity. In matrix form
\be\label{matrix: x}
\rho=\frac{1}{4}\begin{pmatrix}1+z&&&x+y\\ &1-z&x-y&\\ &x-y&1-z&\\ x+y&&&1+z\end{pmatrix}.
\ee
Comparison of the two expressions (\ref{matrix: p}) and (\ref{matrix: x}) yields the relation between the parameters
\ba
&&x=p_0+p_1-p_2-p_3\cr
&&y=p_0-p_1+p_2-p_3\cr
&&z=p_0-p_1-p_2+p_3.
\ea
The subset of entangled states is obtained by applying the Peres's criterion \cite{peres1996separability} which in this case amounts to demanding one of the eigenvalues of $\rho^\Gamma$ to be zero.  The spectrum of $$\rho^\Gamma=\frac{1}{4}\begin{pmatrix}1+z&&&x-y\\ &1-z&x+y&\\ &x+y&1-z&\\ x-y&&&1+z\end{pmatrix}$$ 
is easily found to be 
\ba
&&\lambda_0=\frac{1}{4}(1-x-y-z)=\frac{1}{2}-p_0\cr
&&\lambda_1=\frac{1}{4}(1-x+y+z)=\frac{1}{2}-p_1\cr
&&\lambda_2=\frac{1}{4}(1+x-y+z)=\frac{1}{2}-p_2\cr
&&\lambda_3=\frac{1}{4}(1+x+y-z)=\frac{1}{2}-p_3.
\ea
The subset of separable states is specified by the condition $\lambda_i\geq 0\ \ \forall i$, which defines an octahedron inside the tetrahedron, shown by a dark color in Fig. \ref{fig: tetrahedron}. The region outside this octahedron is the subset of entangled state; that is the region where one of the parameters $p_i>\frac{1}{2}$. This is the region where the contribution of one of the Bell states is dominant in the mixture of all four Bell states. Consider now  optimal linear witnesses of the form $W_i=|\phi^i\ra\la \phi^i|^\Gamma$, where $|\phi^i\ra$ is one of the Bell states. A simple calculation shows that 
\begin{align*}
&\tr(W_0\rho)\equiv \tr(|\psi^-\ra\la \psi^-|^\Gamma \rho)=\lambda_0\cr\\
&\tr(W_1\rho)\equiv \tr(|\phi^-\ra\la \phi^-|^\Gamma \rho)=\lambda_1\cr\\
&\tr(W_2\rho)\equiv \tr(|\phi^+\ra\la \phi^+|^\Gamma \rho)=\lambda_2\cr\\
&\tr(W_3\rho)\equiv \tr(|\psi^+\ra\la \psi^+|^\Gamma \rho)=\lambda_3.
\end{align*}
This shows that these linear witnesses correspond to the four wedges of the tetrahedron which encompass entangled states, as shown in Fig. \ref{fig: tetrahedron}. Moreover it shows that these witnesses are optimal and touch four of the faces of the octahedron. Referring to Fig. \ref{fig: tetrahedron}, it is seen that none of these witnesses are capable to detect more than one Bell state, although they can detect a certain convex combination of Bell states (depending on which wedge they correspond to). In contrast, we can now consider a nonlinear witness like 
\begin{equation}\label{nonlinear witness for appendix}
	F^{(1)}_T(\rho) = \rho_{00,00} \rho_{11,11}- (\Re{\rho_{01,10}})^2= \frac{1}{16}\Big((1+z)^2-(x-y)^2\Big)
\end{equation}
from which we see that the surface $F^{(1)}_T(\rho)=0$ corresponds to the combination of two planes, namely $\frac{1}{16}(1+z+ x-y)(1+z-x+y)=0$, or $\lambda_1\lambda_2=0$ which touch the octahedron of PPT states.   Similarly for a nonlinear witness of the form 
$F_T^{(2)}(\rho)=\rho_{01,01}\rho_{10,10}-(\Re{\rho_{00,11}})^2$, we find that $F^{(2)}(\rho)=0$ is equivalent to $\frac{1}{16}(1+x+y-z)(1-x-y-z)=0$ or $\lambda_0\lambda_3=0$, which corresponds to the combination of two other planes, touching the octahedron. 
\begin{figure}[H]
	\centering
	\includegraphics[width=15cm]{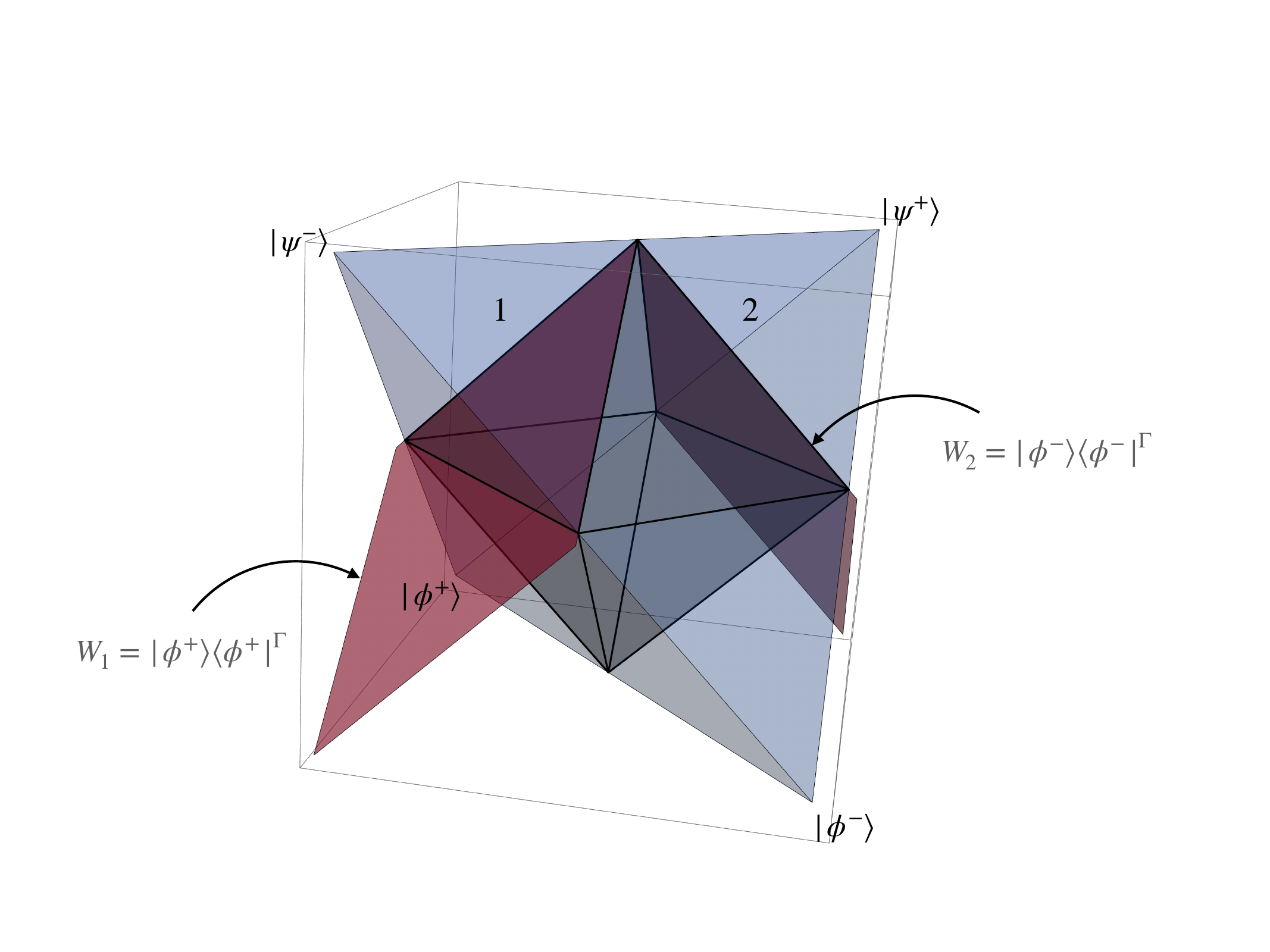}\vspace{-1cm}
\caption{The linear witness $W_1=\ket{\phi^+}\bra{\phi^+}^\G=\frac{1}{2}\mathbb{I}-\ket{\psi^-}\bra{\psi^-}$ can detect the entanglement of the state $|\psi^-\ra\la \psi^-|$ and all  the states in the region No. 1 of the tetrahedron.  Similarly the  witness $W_2=\ket{\phi^-}\bra{\phi^-}^\G = \frac{1}{2}\mathbb{I}-\ket{\psi^+}\bra{\psi^+}$ can detect the entanglement of the state $|\psi^+\ra\la \psi^+|$ and all the states in the region No. 2 in the tetrahedron. The nonlinear witness (\ref{nonlinear witness for appendix}) can detect all the states in both the regions No. 1 and No. 2.}
	\label{fig: tetrahedron}
\end{figure} 

	\newpage
	\bibliography{refs.bib}

\end{document}